\setlist{nosep}
\makeatletter\@ifundefined{bibsep}{\newlength{\bibsep}}{}\makeatother\setlength{\bibsep}{2pt}
\newtheorem{observation}{Observation}
\title{Bitcoin as an Interplanetary Monetary Standard with Proof-of-Transit Timestamping}
\author{
\begin{tabular}[t]{@{}c@{}}
Jose E. Puente\\
\textit{Mobiletalk-Q, SL}\\
\texttt{jepuente@talk-q.com}
\end{tabular}
\and
\begin{tabular}[t]{@{}c@{}}
Carlos Puente\\
\textit{Mobiletalk-Q, SL}\\
\texttt{carlospuenteg@talk-q.com}
\end{tabular}
}
\setlist{itemsep=0.3em, topsep=0.4em, parsep=0em, partopsep=0em}
\newenvironment{hexlisting}{
  \begingroup\ttfamily\scriptsize
  \begin{tabular}{@{}p{0.76\linewidth}>{\raggedright\arraybackslash}p{0.24\linewidth}@{}}
}{
  \end{tabular}\par\endgroup
}
\begin{document}
\date{}
\maketitle
\begin{abstract}
We explore the feasibility of deploying Bitcoin as the shared monetary standard between Earth and Mars, accounting for physical constraints of interplanetary communication. We introduce a novel primitive, Proof-of-Transit Timestamping (PoTT), to provide cryptographic, tamper-evident audit trails for Bitcoin data across high-latency, intermittently-connected links. Leveraging Delay/Disruption-Tolerant Networking (DTN) and optical low-Earth-orbit (LEO) mesh constellations, we propose an architecture for header-first replication, long-horizon Lightning channels with planetary watchtowers, and secure settlement through federated sidechains or blind-merge-mined (BMM) commit chains~\cite{bip300,bip301}. We formalize PoTT, analyze its security model, and show how it measurably improves reliability and accountability without altering Bitcoin consensus or its monetary base. Near-term deployments favor strong federations for local settlement; longer-term, blind‑merge‑mined commit chains (if adopted) provide an alternative. The Earth L1 monetary base remains unchanged, while Mars can operate a pegged commit chain or strong federation with 1:1 pegged assets for local block production. For transparency, if both time‑beacon regimes are simultaneously compromised, PoTT‑M2 (and PoTT generally) reduces to administrative assertions rather than cryptographic time‑anchoring.
\end{abstract}

\noindent\textbf{Keywords:} Bitcoin; Lightning Network; Delay/Disruption-Tolerant Networking (DTN); Interplanetary networking; Time transfer; Satellite communications.

\section{Introduction}\label{sec:intro}
Bitcoin's scarcity, decentralized verification, and predictable issuance make it a natural candidate for a universal monetary standard. As humanity moves toward establishing permanent settlements on Mars, a critical question arises: how can Bitcoin function reliably across interplanetary distances? The one-way light time (OWLT) between Earth--Mars ranges from about 3 to 22 minutes over a synodic cycle (from JPL Horizons/NAIF SPICE ephemerides~\cite{naif_horizons}); combined with intermittent contacts and blackouts, this renders synchronized mining impractical, yet leaves room for verification, local payments, and asynchronous settlement \cite{bitcoin_whitepaper}.

To make notation unambiguous for readers and reviewers, Table~\ref{tab:symbols} summarizes the symbols and parameters used throughout the paper (e.g., OWLT, RTT, jitter allowance $J$, and the incremental Lightning margin $\Delta^{\mathrm{extra}}_{\mathrm{CLTV}}$). We operationalize these quantities in Fig.~\ref{fig:cltv_margin_vs_owlt}, which converts Earth--Mars light-time into concrete \texttt{CLTV} blocks, and show the precise layer attachment points for PoTT metadata in Fig.~\ref{fig:interplanetary_bitcoin_stack_integration_map}.

\paragraph{Scope and regime.} While our design targets AU-scale interplanetary distances
typical of a star's circumstellar habitable zone (CHZ), not just Earth--Mars, we use the
Earth--Mars pair as the canonical, near-term case study because it offers clear operational
constraints, traffic models, and measurements. The mechanisms we propose (e.g., PoTT receipts,
latency-aware Lightning policies, and header-first replication) generalize to other CHZ pairs.

Our main contributions are summarized in Section~\ref{sec:contributions}; we review the foundations and prior systems in Section~\ref{sec:soa}. Early relativistic analyses anticipated that global block intervals would need to scale with interplanetary distances to preserve fairness (e.g., \cite{ladha2016relativity}); we instead preserve Bitcoin's base-layer parameters and shift adaptation to higher layers plus PoTT-based transport accountability.

\paragraph{Paper organization.} Section~\ref{sec:contributions} summarizes our main contributions.
Section~\ref{sec:soa} reviews the state of the art across DTN/BPv7/BPSec, lunar networking frameworks
(LunaNet/Moonlight), and Lightning timing/security results relevant to our setting. Section~\ref{sec:sysmodel}
formalizes the system model and assumptions for AU-scale, intermittently connected links in the circumstellar
habitable zone (CHZ). Section~\ref{sec:pott} defines PoTT and its verification rules. Section~\ref{sec:architecture}
presents the end-to-end design, header-first replication, latency-aware Lightning policy, and settlement rails, together
with the CLTV/CSV parameterization. (CSV can be time-based via BIP-68 relative-locktime units with 512-second granularity; height-based CSV should use the general formula with \(b_{\text{target}}\), not a hard-coded ``/10''.) Section~\ref{sec:security} analyzes security and outlines verification profiles.
Section~\ref{sec:ops} details operational guidance and a phased deployment roadmap. Section~\ref{sec:conclusion} concludes. Appendix~A specifies a minimal PoTT wire format and includes a test vector.

\paragraph{Requirement levels.}
We use the keywords MUST, SHOULD, and MAY as defined in RFC~2119 and RFC~8174~\cite{rfc2119,rfc8174} when, and only when, they appear in all capitals.
\section{Contributions}\label{sec:contributions}
We distill the paper's contributions as follows:
\begin{itemize}[leftmargin=*]
  \item \textbf{Interplanetary Bitcoin architecture.} A physics-aware architecture that
  preserves Bitcoin's base-layer parameters while enabling reliable operation across AU-scale,
  intermittently connected links via header-first replication, latency-aware Lightning, and
  asynchronous settlement rails.
  \item \textbf{Proof-of-Transit Timestamping (PoTT).} A new transport-level receipt primitive
  that cryptographically chains hop-timed custody attestations to Bitcoin payload hashes,
  yielding tamper-evident propagation histories suitable for off-chain disputes and operational
  accountability.
  \item \textbf{Latency-aware Lightning policy.} A closed-form parameterization of interplanetary
  timelocks (CLTV/CSV) that incorporates one-way light time and jitter allowances, and a
  packaging of PoTT evidence for watchtowers and adjudicators.
  \item \textbf{Security, verification profiles, and operations.} A threat model and verification
  profiles (e.g., PoTT-M2) that combine time-beacon audits, OWLT envelopes, and administrative
  diversity, plus practical guidance on key management, retention, and phased deployment.
\end{itemize}

\begin{figure}[t]
    \centering
    \includegraphics[width=0.90\textwidth]{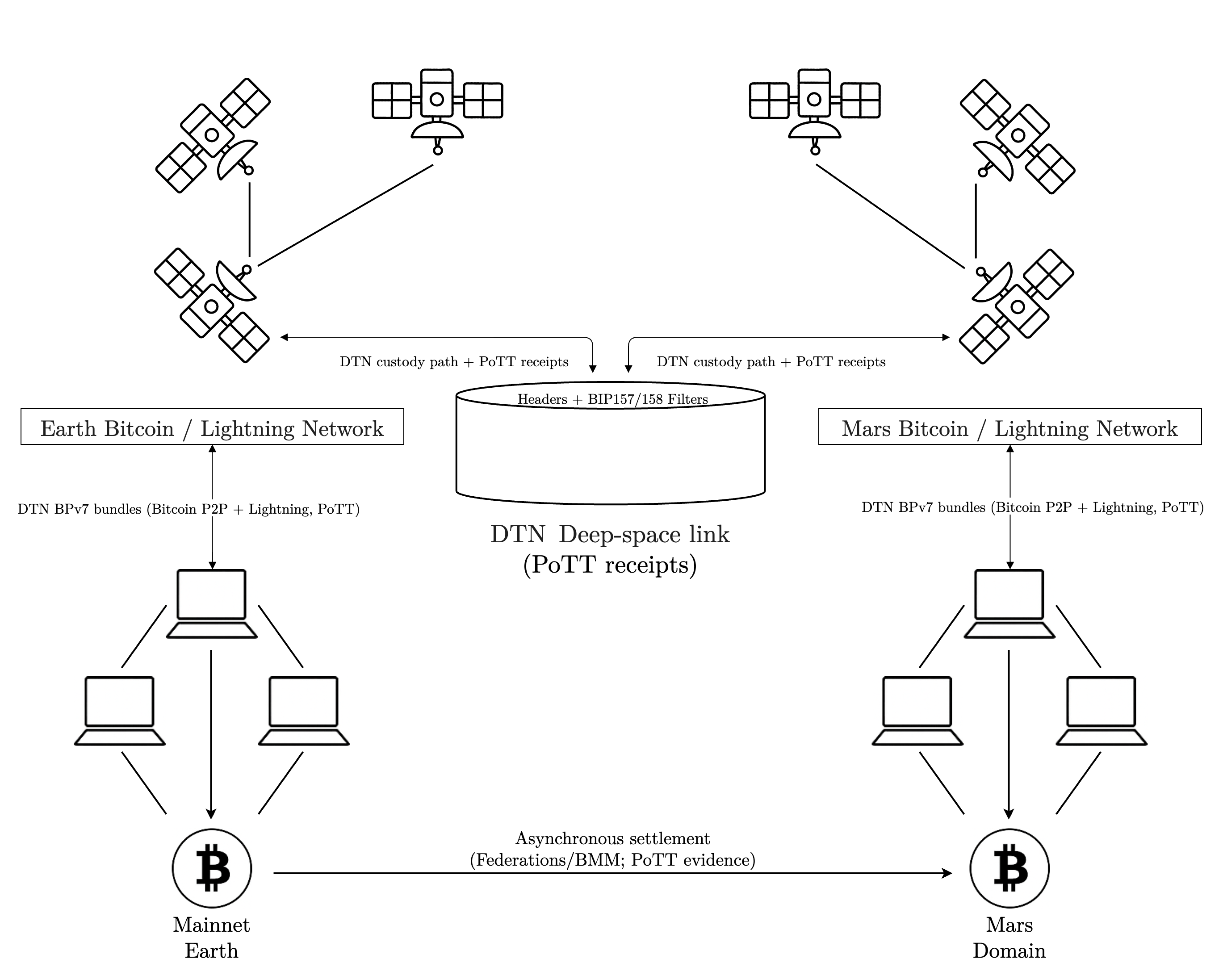}
    \caption{Conceptual architecture for interplanetary Bitcoin using Proof-of-Transit Timestamping (PoTT). DTN deep-space link carries BPv7 bundles with PoTT receipts; each planetary domain runs its own Bitcoin/Lightning network; asynchronous settlement via federations or BMM sidechains.}
    \label{fig:interplanetary_bitcoin_pott_architecture}
\end{figure}

\begin{table}[htbp]
  \centering
  \footnotesize
  \caption{Symbols and parameters used for latency-aware Lightning timelocks and PoTT verification.}
  \label{tab:symbols}
  \setlength{\tabcolsep}{6pt}
  \renewcommand{\arraystretch}{1.15}
  \begin{tabularx}{\linewidth}{@{}>{\RaggedRight\arraybackslash}p{0.22\linewidth}>{\RaggedRight\arraybackslash}X
                                   >{\RaggedRight\arraybackslash}p{0.20\linewidth}@{}}
    \toprule
    \textbf{Symbol} & \textbf{Definition} & \textbf{Units/Example} \\
    \midrule
    \(\mathrm{OWLT}\) & One-way light time Earth--Mars & minutes (3--22) \\
    \(\mathrm{RTT}\) & Round-trip light time; \(\mathrm{RTT}=2\cdot\mathrm{OWLT}\) & minutes \\
    \(\mathrm{RTT}_{\max}\) & Maximum expected RTT over the contact plan & minutes \\
    \(\mathrm{CLTV}\) & Absolute timelock used in HTLCs (BIP65~\texttt{OP\_CHECKLOCKTIMEVERIFY}) & blocks \\
    \(\mathrm{CSV}\) & Relative timelock for commitment/penalty flows (BIP112~\texttt{OP\_CHECKSEQUENCEVERIFY}) & blocks \\
    \(J\) & Contact/jitter allowance (DTN custody, scheduling,\\ processing) & minutes (0, 30, 60) \\
    \(b_{\text{target}}\) & Target L1 block interval & minutes (Bitcoin: 10) \\
    \(\Delta^{\mathrm{extra}}_{\mathrm{CLTV}}\) & Additional \texttt{CLTV} blocks due to interplanetary latency; \(\left\lceil\frac{\mathrm{RTT}+J}{b_{\text{target}}}\right\rceil\) & blocks (1--11) \\
    \(B_{\mathrm{base}}\) & Operator base CLTV margin (Earth policy) & blocks (e.g., 144 at Bitcoin’s 10 min target) \\
    \(M_{\mathrm{op}}\) & Operational margin for queueing/retries & blocks (e.g., 1--2) \\
    \(B_{\mathrm{total}}\) & Recommended interplanetary CLTV; $= B_{\mathrm{base}}+\Delta^{\mathrm{extra}}_{\mathrm{CLTV}}+M_{\mathrm{op}}$ & blocks \\
    \(t_{\mathrm{in}}, t_{\mathrm{out}}\)  & Ingress/egress timestamps encoded in TAI; UTC is for display-only conversions; implementations MUST use TAI consistently. & TAI seconds (CCSDS CUC, epoch 1958-01-01) \\
    \(\mathrm{NodeID}\) & Authorized relay identifier (public key; BIP-340 x-only) used in PoTT & 32-byte public key \\
    $\nu$ & Per-message nonce chosen at the origin; unique per payload hash $h$; relays \textbf{MUST} echo $\nu$ unchanged (see §5 and App.~A) & 16 bytes \\
    \(H(P)\) & Payload digest included in receipts; \textbf{Bitcoin-native} where applicable (double-SHA256 for transactions/headers; BIP157/158 filter identifiers), otherwise SHA-256 over the canonical bytes $P$ & bytes (e.g., 32) \\
    $\delta$ & Safety allowance used in the arrived-before-expiry condition (convert to seconds when used in equations; \(\delta \ge J + 2\sigma_t\)). & minutes (convert to seconds in equations; e.g., $\delta = J + 2\sigma_t$) \\
    \(\sigma_t\) & Time-source uncertainty bound (beacons/clock skew) & minutes (convert to seconds when used in equations; \mbox{$\le 1$} min) \\
    \(\Delta_{\mathrm{MTP}}\) & Policy bound on MTP--UTC skew used in disputes & minutes (convert to seconds when used in equations; e.g., $\le 60$) \\
    \(\kappa\) & Optional reorg margin used in the MTP check & blocks (e.g., 0--6) \\
    \(h_{\mathrm{expiry}}\) & HTLC on-chain expiry height used in the arrived-before-expiry test (BOLT~\#2/\#3); absolute block height & blocks \\
    \bottomrule
  \end{tabularx}
\end{table}

\begin{figure}[t]
    \centering
    \includegraphics[width=0.90\textwidth]{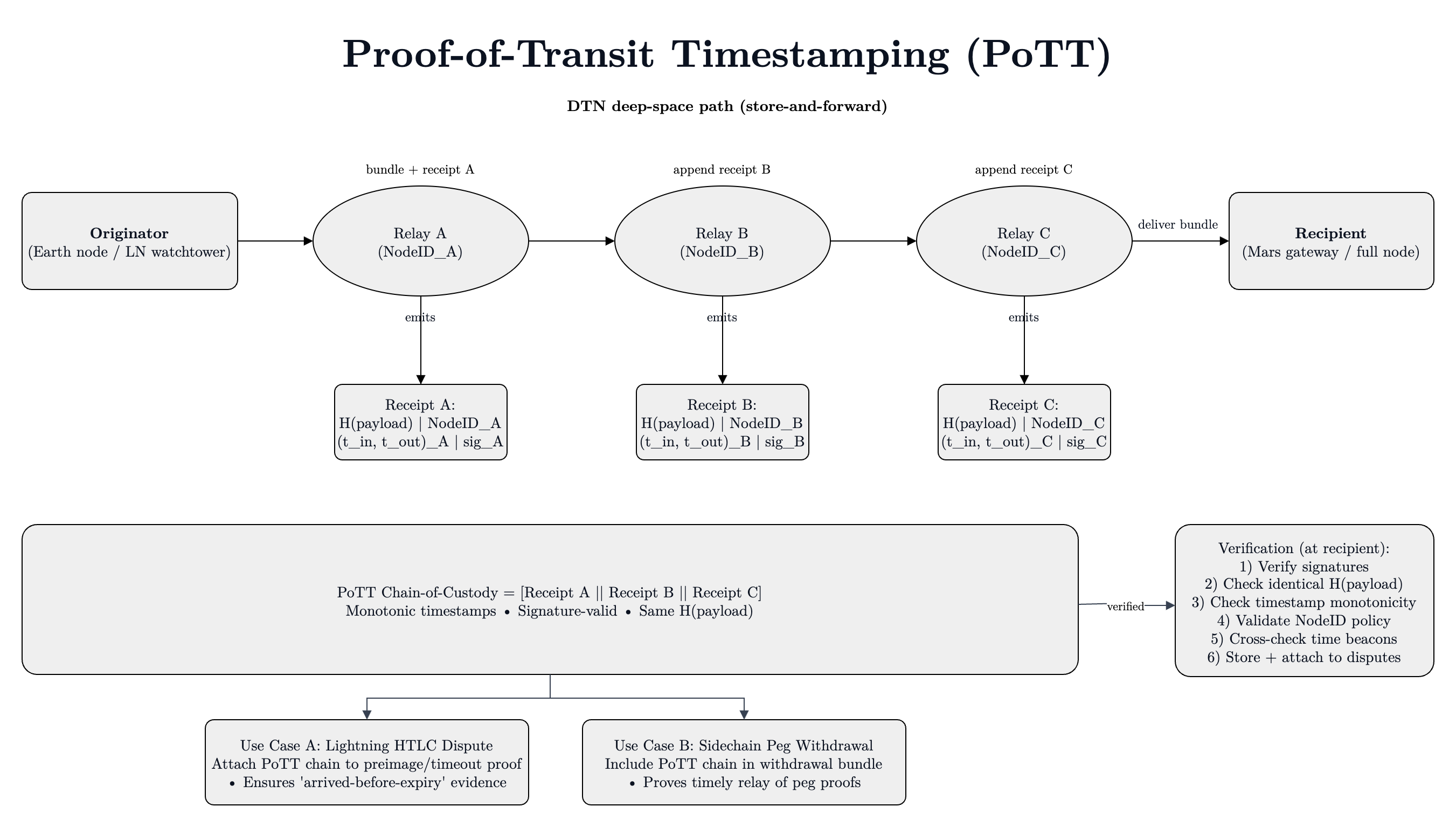}
    \caption{Proof-of-Transit Timestamping (PoTT) detailed flow across a DTN deep-space path. Each relay appends a signed receipt and the concatenated receipts form a verifiable chain-of-custody used for Lightning dispute evidence and sidechain pegs.}
    \label{fig:proof_of_transit_timestamping}
\end{figure}

\begin{figure}[t]
    \centering
    \includegraphics[width=0.88\textwidth]{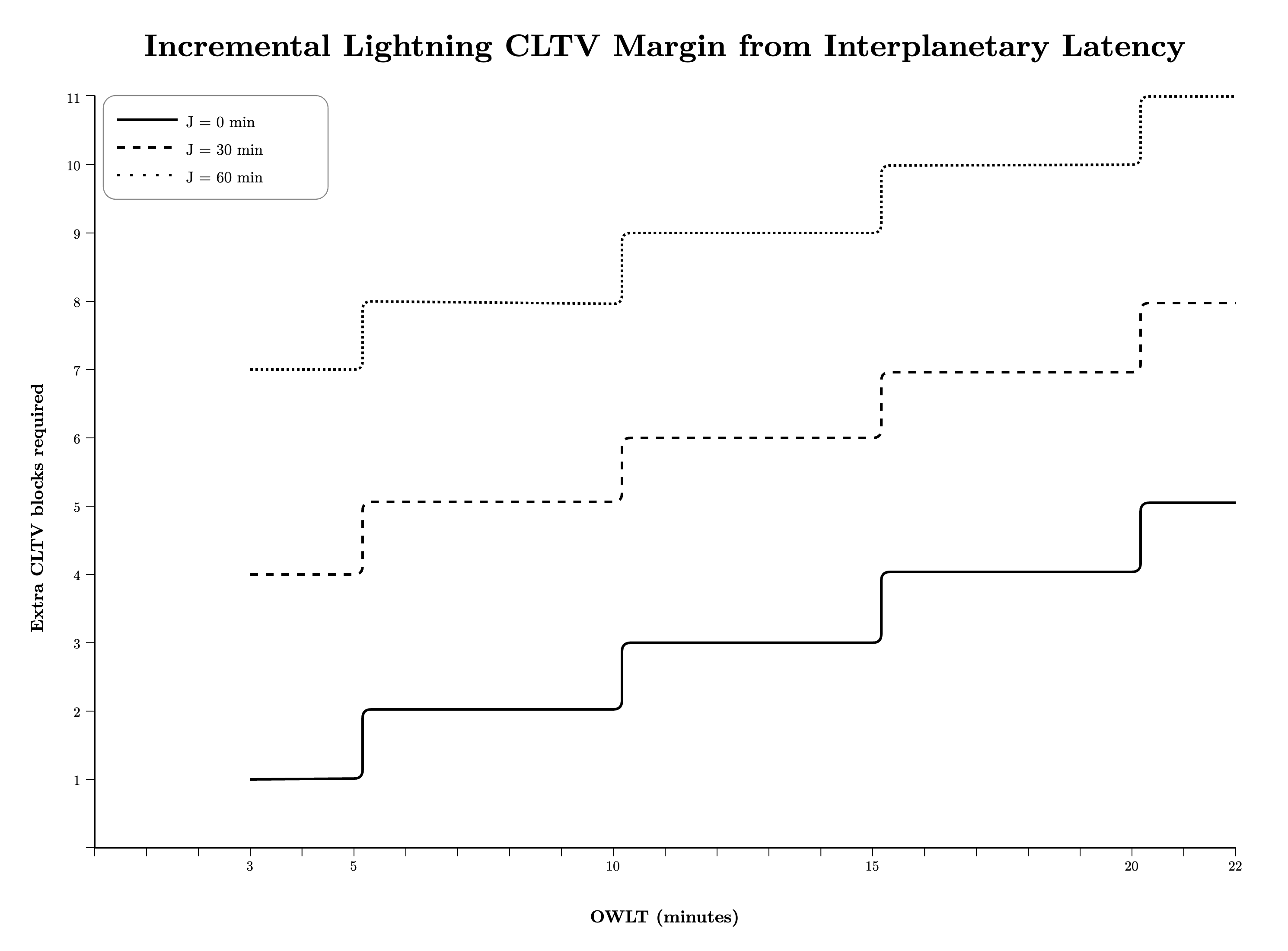}
    \caption{Incremental Lightning CLTV margin due to interplanetary latency. For a given jitter allowance $J$, the additional blocks are $\Delta^{\mathrm{extra}}_{\mathrm{CLTV}}=\left\lceil \frac{\mathrm{RTT}+J}{b_{\text{target}}}\right\rceil$, where $\mathrm{RTT}=2\cdot\mathrm{OWLT}$ and $b_{\text{target}}$ is the L1 target block interval (Bitcoin: $10\,\mathrm{minutes}$). Steps occur at $\mathrm{RTT}+J = k\,b_{\text{target}}$ and increase only when $\mathrm{RTT}+J > k\,b_{\text{target}}$ for $k\in\mathbb{Z}_{\ge 0}$; at equality the value remains $k$. Operators then add their base policy (e.g., 144 blocks).}
    \label{fig:cltv_margin_vs_owlt}
\end{figure}

\begin{figure}[t]
    \centering
    \includegraphics[width=\textwidth]{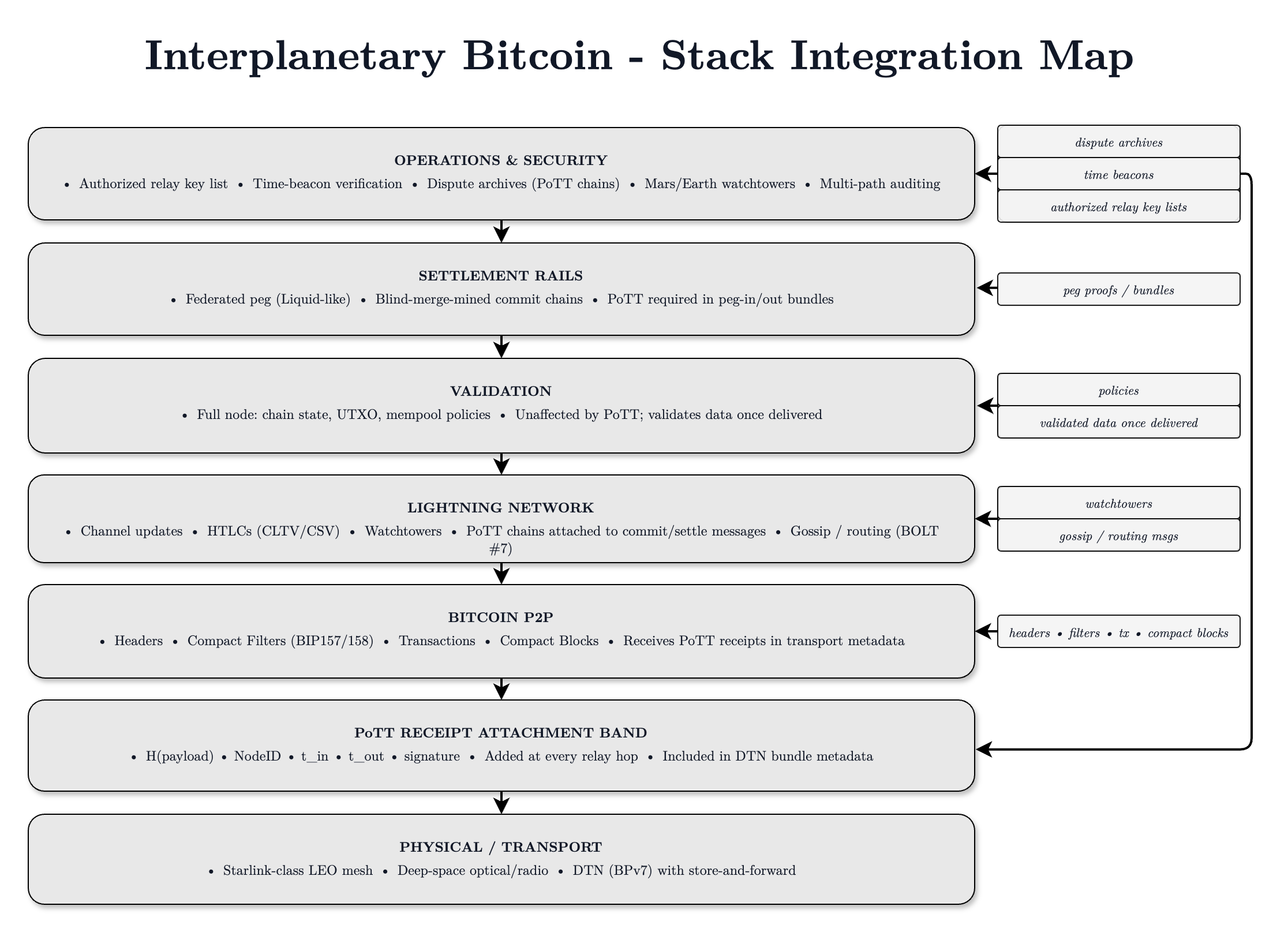}
    \caption{Stack integration map. PoTT receipts are attached in transport metadata (out-of-band) at every relay; Bitcoin P2P carries headers, filters, compact blocks, and transactions. Lightning/watchtowers and settlement rails consume receipts; full-node validation remains unchanged.}
    \label{fig:interplanetary_bitcoin_stack_integration_map}
\end{figure}

\section{State of the Art}\label{sec:soa}
We first review the systems and standards our design builds upon. DTN and the Interplanetary Internet form the transport substrate that makes
planetary-scale networking feasible. The \emph{Bundle Protocol v7} (BPv7) and its
security framework, \emph{BPSec}, specify store-and-forward forwarding, custody
transfer, and hop-by-hop as well as end-to-end protection primitives suitable for
long one-way light times (OWLT) and disconnections~\cite{rfc9171,rfc9172,nasa_dtn_overview}.
Institutional lunar architectures, NASA's \emph{LunaNet} and ESA's \emph{Moonlight}, are
codifying navigation/communications services on top of DTN and are the nearest-term
deployment targets where Bitcoin-related traffic would realistically ride~\cite{lnis_v5_2025,esa_moonlight}.

\textbf{Terminology note.} In the networking literature, “Proof of Transit” usually refers to data-plane path verification/OAM (e.g., SR-based schemes). Our \emph{Proof-of-Transit Timestamping (PoTT)} is different: it produces hop-timed custody \emph{receipts} bound to a Bitcoin payload hash to supply timing evidence and policy signals for Lightning/pegs. PoTT complements BPv7/BPSec; it does not replace them nor assert data-plane path correctness.

\paragraph{Why not just BPSec or Status Reports?}
\emph{Bundle Status Reports (BSRs)} in BPv7 are optional diagnostic messages and are not globally authenticated; they do not create a per-hop custody time chain.
\emph{BPSec} provides end-to-end and/or hop-by-hop authentication of bundle blocks, but it does not yield a signed, per-hop transcript of custody times anchored to external time beacons and an OWLT policy.
PoTT complements BPv7/BPSec by producing a verifiable, beacon-audited, hop-timed chain bound to Bitcoin payload hashes that can be attached to Lightning or peg disputes as off-chain evidence; it does not replace DTN security blocks nor assert data-plane path correctness.

\textbf{Blockchain and payments beyond Earth.} On-Earth, the Bitcoin network has
already been decoupled from the terrestrial Internet via unidirectional broadcast
(Blockstream Satellite) and alternative last-mile relays, demonstrating robustness
to intermittent connectivity~\cite{blockstream_satellite}. In orbit, SpaceChain and
partners have operated blockchain payloads (multisig signing, custody) on the ISS
and commercial satellites, showing that space-based cryptographic operations are
operationally viable even if not consensus-critical~\cite{spacechain_iss}.
These efforts stop short of defining a standards-aligned, physics-aware receipt
layer for cross-planet dispute resolution, which is the focus of PoTT.

\textbf{Lightning and timelocks under long RTT.} Payment-channel security and
capital efficiency under delay have been studied in the context of terrestrial
networks. Sprites introduced constant locktimes across paths and reduced collateral
under adversarial timing~\cite{sprites2017}. Work on routing and reliability
(Pickhardt-Richter) quantifies path selection and failure probabilities relevant
to setting CLTV/CSV policies under uncertainty~\cite{pickhardt2021}. Security
analyses of timing manipulation (``time-dilation''/eclipse) further motivate
auditable transport timing for watchtower policy~\cite{riard2023timedilation}.
The Bitcoin and Lightning standards (BIPs/BOLTs) provide the script/consensus
primitives (CLTV/CSV, compact filters, encrypted transport) PoTT composes with,
without altering L1 consensus~\cite{bip65,bip112,bip157,bip158,bolts2,bolts3,bolts7,bip152,bip324,bip340,bip113}. PoTT is orthogonal to these mechanisms: it strengthens evidence and operational accountability under DTN constraints rather than changing channel-update or settlement protocols.

\textbf{Relativity and macroeconomics across planets.} Prior thought experiments on
\emph{relativistic cryptocurrencies} propose increasing base-layer block intervals
proportionally to network diameter to preserve fairness~\cite{ladha2016relativity}.
Separately, the economics of trade with relativistic delays provides intuition for
interest, discounting, and settlement risk at interplanetary scales~\cite{krugman2010}.
In contrast, our design keeps Bitcoin L1 parameters unchanged and shifts adaptation
to (i) a verifiable transport receipt chain (PoTT) aligned with BPv7/BPSec, and
(ii) L2 policy and dispute mechanisms (Lightning/watchtowers, pegs). To our knowledge,
there is no prior, peer-reviewed architecture that binds \emph{ordered, hop-timed,
cryptographically chained} transit receipts to Bitcoin payload hashes and integrates
those receipts into channel and peg dispute procedures across planetary domains.

\section{System Model}\label{sec:sysmodel}
We consider two primary operational domains, Earth and Mars, with the following elements.
\begin{itemize}
  \item \textbf{Transport:} A combination of optical LEO mesh\,\cite{starlink_isl,starshield} constellations for intra-planet connectivity and deep-space links (e.g., DSOC~\cite{nasa_dsoc}) for interplanetary relays, operated under the Delay/Disruption-Tolerant Networking (DTN) model (Bundle Protocol v7). DTN provides custody transfer, late binding, and contact plans. PoTT receipts are carried in a dedicated DTN extension block parallel to the payload, or as a sidecar bundle, leaving core bundle semantics unchanged.
  \item \textbf{Nodes:} Bitcoin full nodes and Lightning nodes in each domain; optional sidechain federation members; watchtowers for Lightning monitoring.
  \item \textbf{Time sources:} Each relay and gateway maintains a local clock and access to one or more reference time-beacons (e.g., GNSS on Earth, optical two-way time transfer on deep-space links). We denote clock uncertainty bound by $\sigma_t$.
  \item \textbf{Adversary:} Can drop, delay, reorder, or inject messages; can compromise a subset of relays; cannot break standard cryptography (hashes, signatures) nor cause network-wide time-beacons to lie outside $\sigma_t$ for extended periods.
  \item \textbf{Goals:} Preserve Bitcoin's monetary base; enable reliable verification and payments; provide auditable evidence for disputes without modifying Bitcoin consensus.
\end{itemize}

\section{Proposed Approach: Proof-of-Transit Timestamping (PoTT)}\label{sec:pott}
\subsection{Motivation}
In interplanetary operations, disputes frequently hinge on \emph{when} a message arrived relative to a timelock or operational deadline. Missing or delayed headers, transactions, or HTLC updates can force costly rollbacks or channel closures. Existing layers offer no cryptographically verifiable record of message propagation.

\subsection{Formal Definition}
\paragraph{Identifiers and signature scope.}
We define the identifiers carried in PoTT receipts and the exact byte range covered by each relay signature.
\paragraph{Wire format.}

Receipts are serialized as canonical CBOR~\cite{rfc8949} (big-endian integers); the signature covers keys 0--5 as specified. Appendix~A defines the field keys and includes a test vector.
 We standardize that each per-hop signature $s_i$ is a Schnorr signature over the encoded message $M_i = \mathrm{encode}\big(h \parallel \nu \parallel \mathrm{NodeID}_i \parallel t^{(i)}_{\mathrm{in}} \parallel t^{(i)}_{\mathrm{out}} \parallel \mathrm{prev}_i\big)$.
 We instantiate $\mathrm{NodeID}$ as a BIP340 x-only secp256k1 public key (32~bytes) and set $s_i=\mathrm{Sign}_{\mathrm{NodeID}_i}(M_i)$. Other schemes are possible, but we standardize on TAI timestamps; see the ``Timestamp scale'' paragraph.

\paragraph{Timestamp scale.}
PoTT timestamps \textbf{MUST} encode 64-bit International Atomic Time (TAI) seconds using the CCSDS~301 Unsegmented Count (CUC) convention with epoch \textbf{1958-01-01 00:00:00 TAI}. Sub-second fields are \emph{optional}; if present, implementations MAY include a 32-bit fractional field consistent with CUC coarse/fine encoding. This is a wire-format choice only and does not imply the use of UTC on-path. Verifiers \textbf{MUST} use a leap-second table valid at the claimed time $t^{\star}_{\mathrm{TAI}}$ when converting to UTC for display or for comparison with Bitcoin Median-Time-Past (MTP); disputes SHOULD apply a configured bound $\Delta_{\mathrm{MTP}}$ when interpreting ``arrived-before-expiry.'' See CCSDS~301 for details~\cite{ccsds301}. \paragraph{Canonical payload hashing.} \(H\) is the content-identifier used throughout the PoTT chain. When the payload is a Bitcoin object that already has a canonical identifier, \(H(\cdot)\) \textbf{MUST} equal that identifier: double-SHA-256 for block headers and transactions; the BIP157/158 identifiers for compact filters. For all other payload types, \(H(\cdot)\) \textbf{MUST} be SHA-256 of the canonical byte encoding of the payload. The same choice of \(H\) \textbf{MUST} be applied consistently at every hop and is mirrored in Appendix~A's test vector.

Let a payload $P$ (e.g., a header, compact filter, transaction batch, or Lightning update) traverse a path of $N$ relay nodes with identities $\{\mathrm{NodeID}_i\}_{i=1}^N$. Each relay $i$ appends a \emph{receipt}
\begin{equation*}
R_i = (h, \nu, \mathrm{NodeID}_i, t^{(i)}_{\mathrm{in}}, t^{(i)}_{\mathrm{out}}, \mathrm{prev}_i, s_i)
\end{equation*}

\paragraph{Relation to BPSec.}\label{sec:bpsec-relation}
BPSec~\cite{rfc9172} authenticates and/or encrypts bundles at the block level but does not standardize custody timestamps, ordered ingress/egress times, or an anti-splice hash chain bound to a payload digest and a per-message nonce. PoTT is complementary: it yields dispute-grade, monotonic transit evidence without replacing DTN security blocks. In deployments that carry Lightning onion routing, Sphinx packets (BOLT~\#4) are placed in the BPv7 \emph{payload}, while PoTT receipts reside in a BPv7 \emph{extension block}; relays need not understand any Bitcoin semantics to forward bundles.

\paragraph{Chaining and anti-splice.}
Let $h=H(P)$ be the payload hash and let $\nu$ be a 128-bit random nonce minted by the originator for this bundle. Retransmissions of the same payload \textbf{MUST} mint a new $\nu$; verifiers treat each $(h,\nu)$ as a distinct evidence set, and multiple PoTT chains may exist for the same $h$. $\nu$ MUST be unique per payload instance $h$ (not reused across retransmissions) to preclude reuse of identical $(h,\nu)$ pairs across parallel paths. The originator sets $\nu$ once per payload instance and includes it in all bundle copies; relays \textbf{MUST} echo $\nu$ unchanged. 
PoTT binds each hop to the previous one with a hash chain and signs all binding material (specifically, \(h, \nu, \mathrm{NodeID}_i, t^{(i)}_{\mathrm{in}}, t^{(i)}_{\mathrm{out}}, \mathrm{prev}_i\)).

For hop $i\ge 0$ we define
\[
\mathrm{prev}_0:= 0^{256},\qquad 
\mathrm{prev}_i:= H(R_{i-1}\setminus s_{i-1})\ \text{for } i\ge 1.
\]

\noindent\textit{Here $0^{256}$ denotes a 256-bit all-zero string.}
\noindent\textit{Notation:} For any receipt $R_i$, we write $R_i \setminus s_i$ to denote the byte-serialization of $R_i$ with its signature field removed. Consequently, removing or reordering any interior hop changes $\mathrm{prev}_j$ for all subsequent receipts and invalidates their signatures; a chain cannot be cut-and-rejoined without detection.

Each relay $i$ produces a receipt
\[
R_i=\big(h, \nu, \mathrm{NodeID}_i, t^{(i)}_{\mathrm{in}}, t^{(i)}_{\mathrm{out}}, \mathrm{prev}_i, s_i\big),
\]
where the signature covers
\[
M_i = \mathrm{encode}\!\left(h \parallel \nu \parallel \mathrm{NodeID}_i \parallel t^{(i)}_{\mathrm{in}} \parallel t^{(i)}_{\mathrm{out}} \parallel \mathrm{prev}_i\right),
\quad
s_i = \mathrm{Sign}_{\mathrm{NodeID}_i}(M_i).
\]

\textbf{Verification.} A verifier \textbf{MUST} (i) verify all signatures under an authorized relay key list; Operators \textbf{SHOULD} rotate relay signing keys on a fixed cadence and upon suspected compromise, and publish co-signed allowlists and revocation lists via DNSSEC-anchored distribution channels; 
 To limit collusion, verifiers SHOULD require administrative diversity along the custody path (distinct operator keys and disjoint time-beacon sources); chains that fail diversity tests are downgraded in assurance or rejected.
(ii) check $h$ and $\nu$ are identical across all receipts; 
(iii) check timestamp monotonicity $t^{(i)}_{\mathrm{in}}\le t^{(i)}_{\mathrm{out}} < t^{(i+1)}_{\mathrm{in}}$; 
(iv) check $\mathrm{prev}_i=H(R_{i-1}\setminus s_{i-1})$ for every $i\ge 1$; 
(v) apply relay-policy checks and time-beacon audits. 
The inclusion of $(\nu,\mathrm{prev}_i)$ prevents splice/replay attacks in which valid receipts for the same payload hash are recombined into a synthetic path. The detailed end-to-end receipt flow is shown in Fig.~\ref{fig:proof_of_transit_timestamping}. We denote the chain-of-custody as $\mathcal{C}(P) = [R_0 \texttt{||} R_1 \texttt{||} \cdots \texttt{||} R_N]$ carried alongside $P$.

\paragraph{Overhead.} For hash length 32 bytes, key identifier 32 bytes, nonce 16 bytes, ingress/egress timestamps 8 bytes each (TAI seconds; optional 32-bit fractional field per timestamp), previous-link hash 32 bytes, and a 64-byte Schnorr signature, each receipt is $\approx 203$~bytes (roughly 200--205~B depending on CBOR map overhead). Thus chain overhead is $\approx 203N$ bytes for N hops (e.g., 10 hops $\approx$ 2.0~kB). If sub-second fractions are carried, we append a 32-bit fractional field per timestamp (8 bytes total for $t_{\mathrm{in}}$ and $t_{\mathrm{out}}$).

\subsection{Security Properties}
PoTT aims to provide: (i) \textbf{Integrity} of payload and receipts; (ii) \textbf{Authenticity} of relay claims; (iii) \textbf{Ordering/Timeliness} via monotonic timestamps; (iv) \textbf{Publicly attributable claims} at the relay level. PoTT does not prevent censorship, but makes it auditable and attributable to a segment of the path.

\subsection{Integration with Bitcoin Layers}
\begin{itemize}
  \item \textbf{Bitcoin P2P:} PoTT chains are attached in transport metadata (out-of-band) carried alongside headers, filters (BIP157/158), transactions, or compact blocks. Full-node validation remains unchanged; PoTT is not part of consensus. (e.g., BIP152 compact blocks; optional v2 transport BIP324)~\cite{bip152,bip324}.\item \textbf{Lightning:} For cross-planet hashed timelock contracts (HTLCs), \emph{Evidence packaging:} PoTT chains are attached to commit/settle evidence packages for watchtowers and counterparties; we do not alter BOLT~\#2/\#3 wire formats. PoTT chains are included with preimage/timeout evidence and monitored by watchtowers, strengthening disputes about ``arrived-before-expiry.'' The additional \texttt{CLTV} margin is parameterized by Eq.~(\ref{eq:cltv}). PoTT enhances operational accountability and out-of-band arbitration; it does not change on-chain  LN enforceability, which remains purely timelock/script-based. Additionally, operators SHOULD provision CSV deltas for unilateral closes to cover $\mathrm{RTT}_{\max} + J$ plus an operational margin.
  \item \textbf{Sidechains:} Federated pegs or blind-merge-mined commit chains. In our design, we require PoTT evidence for peg-in/out bundles, providing auditability without altering Bitcoin's monetary base.
\end{itemize}

\paragraph{Anchoring ``arrived-before-expiry'' to Bitcoin time.}
When PoTT evidence is used for Lightning HTLC disputes or watchtower policy, timeliness is assessed
against Bitcoin's MedianTimePast (MTP) as defined in BIP-113~\cite{bip113}, not wall-clock time.
Let $C=[R_0,\ldots,R_k]$ be a verified PoTT chain with terminal timestamp $t^{\star}_{\mathrm{TAI}}=t^{(k)}_{\mathrm{out}}$ (TAI). We compare on UTC/Unix seconds by setting $t^{\star}_{\mathrm{UTC}} = t^{\star}_{\mathrm{TAI}} - \Delta_{\mathrm{TAI}\to\mathrm{UTC}}(t^{\star}_{\mathrm{TAI}})$, where $\Delta_{\mathrm{TAI}\to\mathrm{UTC}}(\cdot)$ denotes the current TAI--UTC offset.
Let $H^\star$ be a best-chain header at height $h^\star$ with MTP $T_{\mathrm{MTP}}(H^\star)$.
A verifier \emph{accepts} that a message arrived before expiry if
\begin{equation*}
t^{\star}_{\mathrm{UTC}}+\delta \le T_{\mathrm{MTP}}(H^\star)+\Delta_{\mathrm{MTP}}\quad\text{and}\quad h^\star \le h_{\mathrm{expiry}}-\kappa \,.
\end{equation*}
Here $\delta$ is a safety allowance with $\delta \ge J + 2\sigma_t$, $\Delta_{\mathrm{MTP}}$ is a policy bound on potential MTP--UTC skew (operators SHOULD choose a conservative value; we use $\Delta_{\mathrm{MTP}} = 1\,\mathrm{h}$ in examples). Unless otherwise specified, we set the MTP drift estimation window to \(n = 4032\) blocks (approximately 28 days); dispute bundles \textbf{MUST} include the window and the observed bound, and operators \textbf{MAY} tighten \(n\) with local evidence, and $\kappa$ is an optional reorg margin. Operationally, \(\kappa\) \textbf{SHOULD} exceed an empirical bound on historical MedianTimePast drift estimated over a recent window of \(n\) blocks, in addition to beacon uncertainty, and disputes \textbf{MUST} include the exact block range used for the MTP check. In this paper we set $\delta = J + 2\sigma_t$ by default (operators \textsc{MAY} increase it under degraded beacons). Historic observations of MTP--UTC divergence on mainnet are typically within tens of minutes; therefore we adopt, as a conservative \emph{policy} bound (not an empirical guarantee), $\Delta_{\mathrm{MTP}} \le 1\ \mathrm{hour}$ to accommodate miner timestamp variance while minimizing false accepts. Operators \textit{MAY} tighten this bound with local evidence. \emph{All quantities in the inequality are in seconds; when policy parameters are maintained in minutes (e.g., $J,\sigma_t,\Delta_{\mathrm{MTP}}$), implementations convert by $\times 60$.}
This use of PoTT is strictly out-of-band: it does not alter Bitcoin consensus
or BOLT wire formats; it informs watchtower decisions, routing policy, and off-chain arbitration.
In our prototype, PoTT is carried as a BPv7 extension block within DTN bundles and is consumed by relays and watchtowers; it remains outside Bitcoin consensus and BOLT wire protocols.

\section{Architecture}\label{sec:architecture}
\noindent\textit{Monetary domains.} Earth retains the unchanged Bitcoin L1 as the monetary base. Mars operates a pegged commit chain (we use ``commit chain'' synonymously with a BMM sidechain) or strong federation for local block production with 1:1 pegged assets; no new L1 issuance is introduced. Operationally, we assume strong federations as a practical near-term path and treat BMM commit chains as optional if adopted.

We implement three coordinated layers (see Figs.~\ref{fig:interplanetary_bitcoin_pott_architecture} and \ref{fig:interplanetary_bitcoin_stack_integration_map}).
\begin{enumerate}
  \item \textbf{Header-first replication.} Gateways and relays prioritize \emph{block headers} for timely fork choice and MTP anchoring. At $\sim$52{,}560 blocks/year this is about $80\,\mathrm{B}\times52{,}560 \approx 4.2\,\mathrm{MB/yr}$ (i.e., $\approx 1.07$\,bps). \emph{Compact filters} (BIP157/158) are treated separately: with a conservative median of $20\,\mathrm{kB}$ per block they amount to $\sim 1.05\,\mathrm{GB/yr}$ ($\approx 267$\,bps). Deep-space relays \textit{MAY} therefore ship headers immediately and schedule filters opportunistically (or by checkpoint), preserving safety while matching link budgets; transactions and compact blocks are fetched on demand via the custody path.\item \textbf{Lightning for retail with latency-aware timelocks.} Cross-planet channels and routes set conservative \texttt{CLTV}/CSV deltas. The additional margin due to light-time and jitter is

  \begin{equation}
  \Delta^{\mathrm{extra}}_{\mathrm{CLTV}} = \left\lceil \frac{\mathrm{RTT} + J}{b_{\text{target}}} \right\rceil \text{ blocks} \;\text{(Bitcoin: $b_{\text{target}}=10\,\mathrm{min}$)},
  \label{eq:cltv}
  \end{equation}
\noindent\textit{Here $b_{\text{target}}$ is the target L1 block interval; for Bitcoin $b_{\text{target}}=10\,\mathrm{minutes}$.}
\paragraph{Worked example (sanity check).}
For an Earth--Mars one-way light time $\mathrm{OWLT} = 22$ minutes, the round-trip is $\mathrm{RTT} = 44$ minutes.
With jitter allowance $J=60$ minutes, the additional timelock margin is
$\Delta^{\mathrm{extra}}_{\mathrm{CLTV}} = \left\lceil \dfrac{44 + 60}{10} \right\rceil = 11$ blocks.
With operator base margin $B_{\mathrm{base}} = 144$ and operational margin $M_{\mathrm{op}} = 2$,
the recommended total is $B_{\mathrm{total}} = 144 + 11 + 2 = 157$ blocks.
CSV can be expressed in time using BIP-68 relative-locktime units (512-second granularity), i.e., a time-based CSV of \(t\) seconds corresponds to \(\left\lceil t / 512\ \mathrm{s}\right\rceil\) sequence units~\cite{bip68}.
  
\emph{Note:} On-chain enforcement remains purely script/timelock-based; PoTT evidence informs policy, watchtowers, and off-chain arbitration but does not alter on-chain validity.
\item \textbf{Asynchronous settlement rails.} A strong federation~\cite{liquid_strong_fed} or blind-merge-mined commit chain bridges domains; peg-in/out bundles \emph{must} include PoTT evidence to prove timely relay of peg proofs and to provide tamper-evident logistics histories.
\end{enumerate}

\begin{observation}[Throughput--fairness heuristic bound; proof sketch] Consider two planetary domains with one-way light time (OWLT) $\tau(t)$ between them that varies over a synodic cycle (e.g., Earth--Mars, with $\tau(t) \in [3,22]$ minutes). Any fixed base-layer block interval $b$ that preserves cross-domain mining fairness (bounded stale rate and approximately symmetric tip visibility) across the full cycle must satisfy
\[
 b \gtrsim 2\,\max_t \tau(t) + M.
\]
Here $M$ accounts for validation, queuing, and operational safety margins. In practice this implies $b$ on the order of an hour or longer. Consequently, keeping fairness at base layer via a static $b$ imposes a throughput penalty of at least $b/10$ relative to Bitcoin's current 10-minute cadence.
\end{observation}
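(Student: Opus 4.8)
The plan is to derive the bound from a minimal propagation-and-fork model, reading ``fairness'' operationally as (a) a stale/orphan rate bounded by a fixed tolerance $\epsilon<1$, and (b) the two domains agreeing on the current chain tip for a constant fraction of each block interval (``approximately symmetric tip visibility''). First I would fix the model: block discovery in each domain is Poisson with rate proportional to that domain's hash share, the combined rate has mean inter-arrival $b$, and a block announced in one domain becomes visible in the other only after the one-way light time $\tau(t)$ plus a processing/contact-scheduling term that I absorb, together with operational safety slack, into the additive margin $M$. The causal core of the argument: a block $B$ mined in domain $A$ at time $t_0$ leaves domain $A'$ mining on a strictly stale tip throughout $[t_0,\,t_0+\tau]$, so every block $A'$ produces in that window collides with $B$ at the same height; by light-speed causality this window cannot be shortened.

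Second, I would estimate the contested fraction of time. Under the Poisson model the probability that the other domain mines at least one block inside a length-$\tau$ propagation window is $1-e^{-\tau/b}$, and accounting for both directions the overall stale rate is on the order of $\tau/b$ for $\tau\ll b$, rising toward a constant as $\tau\to b$. Requiring it below $\epsilon$ already forces $b \ge \tau/\ln\!\big(1/(1-\epsilon)\big)$, but this is only $\approx 1.44\,\tau$ even at the permissive $\epsilon=\tfrac12$; the factor $2$ therefore comes from condition (b), not from (a). For symmetric tip visibility, a miner in $A$ cannot treat its height-$k$ block as settled until it learns whether $A'$ also produced a height-$k$ block, which takes a full round trip $2\tau$ (announce, then hear back); for that acknowledgment loop to close \emph{within} one block interval — so that both domains share a tip for most of each interval rather than perpetually chasing each other — one needs $b \gtrsim 2\tau$. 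Folding in validation, queuing/contact jitter, and operational safety as the additive $M$ yields $b \gtrsim 2\tau(t) + M$ at any instant. Since $b$ is a single fixed base-layer constant while $\tau(t)$ sweeps $[3,22]$ minutes over the synodic cycle, the condition must hold at the worst case, giving $b \gtrsim 2\max_t\tau(t) + M \approx 44\,\mathrm{min} + M$, i.e.\ of order an hour — recovering, along the relativistic-fairness line of \cite{ladha2016relativity}, a linear-in-light-time scaling of the block interval. Base-layer throughput scales as $1/b$, so relative to Bitcoin's $10$-minute cadence the static-$b$ interplanetary chain clears blocks (hence transactions, at fixed block size) by a slowdown factor $b/10$, which is at least $(2\max_t\tau+M)/10 \gtrsim 4.4$ and of order $6$ once $M$ is included — the claimed penalty.

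I expect the main obstacle to be pinning down the constant $2$ and the precise meaning of ``fairness.'' A naive one-way estimate already gives $b\gtrsim\tau$ for any loose orphan-rate target, so the strengthening to $2\tau$ genuinely rests on the ``symmetric tip visibility'' clause and on a round-trip acknowledgment/fork-resolution argument that is, by nature of the Observation (note the ``$\gtrsim$''), heuristic rather than tight; I would not try to extract a sharp constant. A secondary subtlety is scoping: the hypothesis is explicitly a \emph{fixed} $b$, and a schedule $b(t)$ tracking $\tau(t)$ could in principle relax the worst-case factor — so the statement must (and does) restrict to static $b$. Finally, I would note that the model deliberately ignores selfish-mining refinements and exact two-party stale-rate formulas (Sompolinsky--Zohar-style); these only tighten the required $b$, so they do not threaten the lower bound.
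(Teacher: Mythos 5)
Your argument is sound at the level of rigor the Observation claims (it is explicitly a heuristic with ``$\gtrsim$''), but it takes a genuinely different decomposition from the paper's sketch. The paper uses a single condition: it models block arrivals as Poisson with rate $1/b$, folds the round trip and margins directly into an effective delay $D \approx 2\max_t\tau(t)+M$, approximates the stale probability by $1-e^{-D/b}\approx D/b$, and then requires this to be below a tolerance $\varepsilon$, yielding $b \gtrsim (2\max_t\tau(t)+M)/\varepsilon$ --- which for its worked example ($\max_t\tau\approx 22$ min, $\varepsilon=0.05$) gives $b\gtrsim 880$ min, far beyond the stated inequality. You instead split fairness into two conditions: the stale-rate requirement, which you correctly observe only forces $b$ of order $\tau$ (with a constant depending on $\epsilon$, e.g.\ $\approx 1.44\,\tau$ at $\epsilon=\tfrac12$), and a separate round-trip ``symmetric tip visibility'' argument --- a domain cannot regard its tip as settled until the $2\tau$ announce-and-hear-back loop closes within a block interval --- which is what supplies the factor $2$, with $M$ added afterward. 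Your route has the virtue of explaining where the factor $2$ in the stated bound actually comes from (the paper simply bakes the round trip into $D$ without separate justification) and of landing on the statement's own calibration ($b\gtrsim 2\max_t\tau+M$, i.e.\ order an hour), whereas the paper's version buys an explicit tolerance dial $\varepsilon$ that shows how a strict stale-rate target inflates $b$ by a further factor $1/\varepsilon$; the two sketches are consistent as heuristics but emphasize different constants, and neither claims tightness. Your closing caveats (fixed $b$ versus a schedule $b(t)$, and ignoring Sompolinsky--Zohar-style refinements that would only tighten the bound) are appropriate and do not conflict with anything in the paper.
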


\begin{proof}[Sketch]
We model block arrivals as a Poisson process of rate $1/b$ and approximate the worst-case one-way latency by $\max_t \tau(t)$ over a synodic cycle. Let the end-to-end propagation/validation delay be
$D \approx 2\max_t \tau(t)+M$, where $M$ covers validation, queuing, and operational safety margins. In this back-of-the-envelope model, the probability that a found block becomes stale is
$1-e^{-D/b} \approx D/b$ for $D\ll b$. To keep the stale fraction $\le \varepsilon$, it suffices to choose
\[
 b \gtrsim \frac{2\max_t \tau(t)+M}{\varepsilon}.
\]
For Earth--Mars, taking $\max_t \tau(t)\approx 22$ minutes, $\varepsilon=0.05$, and $M$ small relative to $b$ gives $b \gtrsim 44/0.05 = 880$ minutes $\approx 14.7$ hours. This is a heuristic argument (not a formal proof); a careful treatment would require an explicit miner/network model and policy analysis, which we leave for future work. Practically, this motivates keeping L1 parameters fixed and shifting adaptation to higher layers (e.g., PoTT receipts, Lightning/sidechains).
\end{proof}

\section{Security Analysis}\label{sec:security}
\paragraph{Threat model and guarantees.}
\begin{quote}
\textbf{Assumptions and Guarantees.} \emph{Assumes} unbroken hash/signature primitives; availability of signed time beacons; at least one honest relay on each verified path; and integrity of ephemerides/contact plans used for OWLT envelopes. \emph{Guarantees} custody attestation via a per-hop receipt chain, per-hop ordering with monotonic timestamps, and splice-resistance across the path; PoTT \emph{does not} guarantee liveness or delivery.
\end{quote}
\paragraph{Assurance summary.} \emph{PoTT provides accountability---tamper-evident custody plus per-hop timing---but it does not by itself provide liveness or censorship-resistance; those remain policy- and topology-dependent. Our verification profiles (e.g., \textbf{PoTT-M2}) mitigate collusion by requiring administrative diversity and independent time-beacon audits, and can be strengthened with path diversity for high-stakes disputes.}

In the degenerate case where \emph{all} relays on the path collude and public time-beacons are compromised, PoTT evidence reduces to administrative assertions about custody (NodeIDs and path admission); verifiers \emph{SHOULD} treat such chains as non-probative absent corroboration from independent anchors. In keeping with this, PoTT is intended to degrade gracefully to administrative assertions; policy profiles (Table~\ref{tab:pott-m2}) \textbf{SHOULD} require at least two independent, TAI-traceable timing anchors for any path whose evidence is used in adjudication.

 PoTT provides \emph{custody attestation with monotonic timestamps}: it binds an ordered set of relay claims to an unchanged payload and discloses ingress/egress times at each relay. It does \emph{not} guarantee liveness or delivery; relays may drop traffic after signing, and censorship remains possible. Absolute time need not be perfectly synchronized, verification requires monotonic per-hop times within policy bounds, and consistency with external anchors and contact-plan geometry.

\paragraph{Relation to BPSec.} See \S\ref{sec:bpsec-relation}.\paragraph{Time anchors and policy profile.} Verifiers SHOULD check PoTT chains against (i) signed public time-beacons on each domain (e.g., Earth and Mars) and (ii) the ephemeris-derived one-way light time (OWLT) envelope for the claimed window. We recommend a practical profile, \emph{PoTT-M2}: at least three relays from at least two administrative domains, with at least one anchor from each domain; per-hop times must satisfy $|\Delta t|\le J$ and be consistent with the current OWLT envelope. Operators maintain authorized relay key lists and revoke keys that fail audits. By default, for high-stakes disputes, verifiers \textbf{MUST} require at least two path-diverse PoTT chains anchored to disjoint time-beacon regimes; a single chain \textbf{MAY} suffice for routine settlements.

 As a concrete instantiation of these checks, see Table \ref{tab:pott-m2}.
\begin{table}[t]
\centering
\caption{Verification profile \textbf{PoTT-M2}. Operational knobs for verifiers.}

In deployment, a receiver’s watchtower may reject HTLCs that lack an accompanying PoTT chain per a stated policy profile (e.g., PoTT-M2), providing an explicit enforcement hook.
\label{tab:pott-m2}
Operators \textbf{MAY} enforce PoTT-M2 compliance in watchtower policy and federation membership, rejecting non-compliant evidence packages.

\begin{tabular}{p{0.42\linewidth}p{0.5\linewidth}}
\toprule
Parameter & Value/Requirement \\
\midrule
Minimum relays (hops) & $\geq 3$ receipts \\
Administrative diversity & $\geq 2$ distinct operator domains; at least one anchor from each planetary domain \\
Time anchors & Signed public time-beacons on each domain (Earth, Mars) \\
Per-hop timing bound & $|\Delta t| \le J$ and consistent with current OWLT envelope \\
Path diversity (high-stakes) & Require $\geq 2$ path-diverse chains (disjoint operators/time-beacon regimes) \\
Authorized relay keys & Maintain allowlist; revoke failing audits and exclude unauthorized NodeIDs \\
Chain size cap & $\leq 32$ hops \emph{or} $\leq 8$ kB of receipts per bundle \\
Retention window & Keep receipts $\geq 90$ days; longer if attached to disputes \\
\bottomrule
\end{tabular}
\end{table}
\noindent\emph{OWLT envelope source:} Verifiers SHOULD derive the one-way/round-trip light‑time envelope from published contact plans or ephemerides (e.g., SPICE kernels) and record the version used during verification.

\noindent\textit{Assumption (PoTT-M2).} At least one audited relay on the verified path is honest, and the public time-beacon regimes on each planetary domain remain within $\sigma_t$ of true time; if all relays collude and external beacons are compromised, PoTT evidence degrades to administrative assertions.

\paragraph{Multipath receipts and diversity policy.} For high-stakes disputes, verifiers \textbf{SHOULD} require evidence from at least two path-diverse PoTT chains (disjoint relay operators and, when feasible, disjoint time-beacon regimes). When multipath was available but only a single path is presented, adjudicators may treat the evidence as weak.
Verification policy rejects any PoTT chain whose timing is only explainable by back-dating beyond the declared clock-uncertainty bound $\sigma_t$ and the published contact plans; chains must be consistent with beacon cross-checks.

\paragraph{Relay collusion or forged timestamps.} Multiple independent time-beacons and cross-checks on contact plans limit feasible skew. Multi-path routing yields divergent chains if a colluding subset lies about timing.
\paragraph{Receipt omission or truncation.} Verifiers enforce minimum-hop and diversity policies; missing receipts invalidate policy compliance. DTN custody supports custody-based retransmission (RFC 9171) on failure.
\paragraph{Sybil relays.} Operators maintain authorized relay key lists with revocation. Watchtowers and federation members reject chains containing unauthorized NodeIDs.
\paragraph{Denial-of-service via oversized metadata.} Cap PoTT chains to \(\le 32\) hops or \(\le 8\,\mathrm{kB}\); compress receipts; use per-hop admission control.
\paragraph{Privacy.} PoTT receipts expose hop count and coarse path structure to any party that inspects them. Deployments SHOULD minimize leakage by using the privacy mode (commit-and-reveal) described below.
\paragraph{Privacy mode (commit-and-reveal).}
Relays \textbf{MUST} onion-encrypt receipt metadata hop-to-hop (Sphinx per BOLT~\#4~\cite{bolts4}); routine attestations \textbf{SHOULD} redact per-hop times, and only disputes reveal full transcripts. For routine attestations, relays commit to the full receipt transcript by publishing
$H_{\mathrm{txpt}} = H(R_0 \parallel \cdots \parallel R_N)$.
The wire conveys only the tuple $(H_{\mathrm{txpt}}, \min(t_{\mathrm{in}}), \max(t_{\mathrm{out}}), \mathrm{hopcount})$.
Thus observers learn only $[\min(t_{\mathrm{in}}),\max(t_{\mathrm{out}})]$ and hopcount; relay identities and per-hop timestamps remain encrypted and are revealed only if a dispute requires opening the transcript. In case of dispute, parties reveal the transcript and verify (i) that the recomputed $H_{\mathrm{txpt}}$ matches,
(ii) per-hop times lie within $[\min(t_{\mathrm{in}}), \max(t_{\mathrm{out}})]$, and (iii) policy bounds are met.
This achieves minimal disclosure while preserving verifiability of end-to-end timing.
Deployments \textbf{SHOULD} minimize leakage by aggregating to $[\min(t_{\mathrm{in}}),\max(t_{\mathrm{out}})]$ and hop count for routine attestations, carrying receipts under layered onion encryption so each relay learns at most its predecessor and successor, and disclosing full transcripts only during disputes once policy requirements are satisfied (cf.~Table~\ref{tab:pott-m2}).\section{Operational Considerations and Roadmap}\label{sec:ops}
\textbf{Key management.} Relays use hardware-backed keys with periodic rotation; public keys distributed via signed manifests. Relay allowlists are distributed as signed manifests (e.g., DNSSEC-anchored or RFC-style registries) with routine audits and revocation procedures.\newline

\textbf{Cross-domain trust (Earth$\leftrightarrow$Mars).}
Relay allowlists are published as DNSSEC-anchored manifests with mirrored roots in each planetary domain; a federated governance group co-signs periodic checkpoints and CRLs.
Revocations propagate with maximum staleness bounded by the manifest TTL (hours--days); verifiers MUST account for this latency when auditing evidence that spans domains.
\textbf{Time synchronization.} Combine GNSS (Earth), onboard oscillators, and deep-space two-way time transfer (e.g., CCSDS time-code formats and interoperability \cite{ccsds301}); keep $\sigma_t$ within operational bounds.\newline If beacons are unavailable or exceed the configured uncertainty bound $\sigma_t$, operators SHOULD increase the verification margin $\delta$ and/or quarantine PoTT evidence until clocks recover; for high‑stakes operations (e.g., peg withdrawals) systems MUST fail closed.
 Operators \textbf{SHOULD} source at least two independent, TAI-traceable public time beacons (e.g., DSN/USNO and ESA/ESTEC) and record the beacon identities alongside any PoTT evidence attached to disputes.
\textbf{Default margins.} Unless otherwise stated, set $\delta = J + 2\sigma_t$; operators \emph{MAY} increase this during degraded beacons.\newline
\textbf{Retention.} Store PoTT chains for a fixed horizon (e.g., 90 days) unless attached to disputes.\newline
\textbf{Phased deployment.} Phase 0: Earth testbeds; Phase 1: cis-Mars demos; Phase 2: Mars orbit/surface networks; Phase 3: scale security via blind merge-mining and mature governance.

\section{Conclusion}\label{sec:conclusion}
PoTT supplies the missing accountability layer for delay-tolerant Bitcoin without any L1 consensus changes; together with header-first replication and latency-aware Lightning policy, it enables a practical interplanetary Bitcoin economy.
Physics forbids synchronous, cross-planet competitive mining at Bitcoin's current parameters,
but it does \emph{not} forbid a Bitcoin-only economy across planets. This paper showed that
by (i) prioritizing header-first replication for chain awareness, (ii) operating Lightning with
latency-aware CLTV/CSV policies derived from one-way light time (OWLT) and jitter budgets, and
(iii) settling asynchronously via strong federations or blind-merge-mined (BMM) commit chains (we use ``commit chain'' synonymously with BMM sidechain),
a practical and incrementally deployable architecture emerges for interplanetary Bitcoin.

The core technical contribution is \textbf{Proof-of-Transit Timestamping (PoTT)}, a transport-level
receipt layer that binds an ordered, hop-timed chain of custody to the hash of the transported
Bitcoin payload (headers, filters, transactions, compact blocks, or Lightning updates).
PoTT does \emph{not} alter consensus or script semantics; instead, it raises the quality of
\emph{evidence} available to watchtowers, counterparties, and peg operators. By combining
per-hop signatures, a splice-resistant hash chain, and TAI-encoded ingress/egress times, PoTT
turns otherwise opaque DTN store-and-forward behavior into auditable artifacts. We specified a
verification profile (PoTT-M2) that cross-checks receipts against authorized relay keys, public
time-beacons, and OWLT envelopes, and we showed how evidence packages anchor ``arrived-before-expiry''
claims to Bitcoin's MedianTimePast (BIP-113) without introducing new consensus rules. PoTT is strictly \emph{out-of-band}: receipts are consumed by watchtowers and peg federations; no changes to Bitcoin consensus or BOLT wire formats are required.

From an operations perspective, our design keeps failure domains small and controllable. Header-first replication requires only an $\mathcal{O}(\mathrm{MB/yr})$ \emph{annual budget}---about $4.2\,\mathrm{MB/yr}$ for headers (i.e., $\approx 1.07\,\mathrm{bps}$ sustained)\footnote{Computation: headers at $80$\,B per block $\times$ $52{,}560$\,blocks/yr $\approx 4.2$\,MB/yr $\approx 1.07$\,bps; compact filters at $20$\,kB per block $\times$ $52{,}560$\,blocks/yr $\approx 1.05$\,GB/yr $\approx 267$\,bps.}---and compact filters at $20\,\mathrm{kB}$/block require $\approx 1.05\,\mathrm{GB/yr}$ (i.e., $\approx 267\,\mathrm{bps}$). Lightning
channels pay a predictable timelock premium $\Delta^{\mathrm{extra}}_{\mathrm{CLTV}} = \lceil\frac{\mathrm{RTT}+J}{b_{\text{target}}}\rceil$
blocks (Fig.~\ref{fig:cltv_margin_vs_owlt}), and PoTT receipts add kilobytes, not megabytes, per interplanetary
bundle. The architecture composes with existing BIPs/BOLTs (e.g., BIP152, BIP157/158, BIP324) and
DTN/BPv7/BPSec without special treatment from Bitcoin full nodes. This separation of concerns
preserves Bitcoin's monetary base and validation model while letting higher layers internalize
the unavoidable costs of distance.

\paragraph{Limitations and scope.} PoTT provides accountable \emph{custody attestation}, not liveness;
relays may still censor or drop traffic. Our threat model assumes sound cryptography, bounded
clock uncertainty, and independently operated time-beacons; extreme faults in these primitives
can weaken evidence quality. Finally, we have not attempted to equalize miner fairness across
planets at L1, a path that would require hour-scale blocks and collapse global throughput.
Instead, we accept temporarily segmented fee markets.
During periods of opposition or limited interdomain connectivity, participants should expect temporary rebalancing costs and FX-like spreads across domains; these frictions diminish as liquidity and path diversity improve, and rely on asynchronous pegs and periodic
liquidity rebalancing across domains.

\paragraph{Implications.} Under these constraints, Bitcoin can function as a shared monetary
standard between Earth and Mars: verification remains local and cheap; retail payments proceed
over LN with safety margins that scale with physics; and high-value flows settle asynchronously
with tamper-evident logistics histories. In short, PoTT supplies the missing accountability layer
for delay-tolerant Bitcoin, turning a hard physical limit (light time) into explicit policy knobs
instead of consensus changes.

\section{Future Work}
While our proposal is already concrete enough for early prototyping, three areas stand out as the most impactful next steps:

\begin{enumerate}[leftmargin=*]
  \item \textbf{Formal security and timing proofs.} Develop a rigorous, game-based security proof for PoTT's anti-splice construction and a complete end-to-end timing model that incorporates beacon uncertainty, contact-plan geometry, and DTN queueing. This would strengthen confidence in PoTT as a standard.
  \item \textbf{Interoperable specifications and reference implementation.} Produce an open, I-D/RFC-style specification for PoTT's DTN extension block and receipt wire format, and implement a public reference library with Bitcoin/Lightning integration for watchtowers and sidechain pegs.
  \item \textbf{End-to-end testbeds.} Run terrestrial emulation of AU-scale RTT and blackout conditions, followed by public pilots linking separated ground stations with header-first replication, Lightning channels, and PoTT evidence enforcement.
\end{enumerate}

These steps would take PoTT from a research proposal to an operational, cross-vendor standard ready for deployment in interplanetary Bitcoin networks.

\appendix
\section*{Appendix A: Minimal PoTT Wire Format and Test Vector}

\noindent\textbf{Receipt structure.} Receipts use \emph{canonical CBOR} maps with the following integer keys and types (canonical ordering by key):\par
\vspace{0.3\baselineskip}

\noindent\begingroup
\setlength{\emergencystretch}{2em}
\renewcommand{\arraystretch}{1.08}
\begin{tabularx}{\linewidth}{@{}>{\ttfamily}l >{\arraybackslash}X@{}}
0: h     & \textit{bstr} (32) --- payload digest $H(P)$ (Bitcoin-native: double-SHA256 for transactions/headers; BIP157/158 filter identifiers; otherwise SHA-256 of canonical $P$) \\
1: $\nu$ & \textit{bstr} (16) --- per-message nonce; MUST be unique per payload instance $h$ (not reused across retransmissions) \\
2: node  & \textit{bstr} (32) --- relay NodeID (public key; BIP-340 x-only) \\
3: tin   & \textit{int} --- ingress time (TAI seconds; CCSDS CUC, epoch 1958-01-01) \\
4: tout  & \textit{int} --- egress time (TAI seconds; CCSDS CUC, epoch 1958-01-01) \\
5: prev  & \textit{bstr} (32) --- prev$_i = H(R_{i-1}\setminus s_{i-1})$ --- anti-splice binding to the prior hop's receipt sans signature \\
6: sig   & \textit{bstr} (64) --- BIP340\allowbreak{} Schnorr over the canonical CBOR\allowbreak{} of keys 0--5 \\
\end{tabularx}
\endgroup

\medskip
\noindent\textbf{Minimal test vector (normative).} The following CBOR hex encodes one receipt (keys 0--6). Values are provided in full for reproducibility; the signature is a dummy (parser-checkable).

\begin{hexlisting}
  00\ 58\ 20\ 83 a0 12 ac 61 2c 83 f6 89 17 73 87 35 34 65 fb 96 13 56 e8 1b cd 8a da 4b a0 d6 57 da 1c 26 85\ \  & \#0:h = H(P) \\
  01\ 50\ 22 19 c6 46 c0 c3 53 d1 87 ef b2 ca b9 ef 61 5b\ \  & \#1:nu \\
  02\ 58\ 20\ d4 06 3a ea 17 03 81 ce ca f4 d4 3b 1e 8d d3 2e c1 34 9f ac 78 ed c0 75 ce 08 fb 36 4d 60 40 43\ \  & \#2:node \\
  03\ 1B\ 0000000065B9B8A0\ \  & \#3:tin \\
  04\ 1B\ 0000000065B9BD40\ \  & \#4:tout \\
  05\ 58\ 20\ 2c 77 0e 00 80 83 e6 2a fd 13 76 98 ce 19 6d b6 5c b4 06 eb 2b 4c 50 6c b6 fa 0c 54 6f 95 d8 55\ \  & \#5:prev \\
  06\ 58\ 40\ db d5 95 30 45 c5 b1 31 a2 5e ca bd 6f 2d 78 6b 28 7e e1 da 3a e2 84 5b 27 89 b5 1c cd c3 82 ef 83 68 e0 36 50 87 9c 71 75 5b 7f da 46 6b 44 a7 32 18 f6 82 06 25 e9 59 2f cc b3 a6 13 3b 92 b2\ \  & \#6:sig (dummy) \\
\end{hexlisting}

Implementations \textbf{MUST} reject non-canonical encodings and unknown keys.

\section*{Acknowledgments}
We thank the open-source Bitcoin and space networking communities for public documentation and standards that informed this work.

\end{document}